\theoremstyle{definition} 
\theoremstyle{definition} 
\newtheorem {theorem} {Theorem}
\newtheorem {lemma} {Lemma}
\newcommand{\kb}[1]{\mathbf{\left[#1\right]}}
\newcommand{\al}{\mathcal{A}}
\newcommand{\trd}[1]{\left|\left| #1 \right| \right|}
\newcommand{\samp}{\Phi}
\newcommand{\st}{\text{ } | \text{ }}
\newcommand{\Hmin}{H_\infty}
\newcommand{\Hextd}{\bar{H}}
\newcommand{\experiment}[1]{\leftarrow\textbf{\texttt{Exp}}\left(#1\right)}
\newcommand{\experimentname}{\textbf{\texttt{Exp}}}
\begin{document}
\title{A New High-Dimensional Quantum Entropic Uncertainty Relation with Applications}

\author{%
  \IEEEauthorblockN{Walter O. Krawec}
  \IEEEauthorblockA{University of Connecticut\\
                    Department of Computer Science and Engineering\\
                    Storrs, CT, USA 06269\\
                    Email: walter.krawec@gmail.com}
}

\maketitle

\begin{abstract}
In this paper we derive a new quantum entropic uncertainty relation, bounding the conditional smooth quantum min entropy based on the result of a measurement using a two outcome POVM and the failure probability of a classical sampling strategy.  Our relation works for systems of arbitrary dimension.  We apply it to analyze a new source independent quantum random number generation protocol and show our relation provides optimistic results compared to prior work.
\end{abstract}
\emph{This is a (slightly) extended version of a paper to appear in IEEE ISIT 2020.}
\section{Introduction}
Quantum entropic uncertainty relations have numerous applications in quantum information, communication, and cryptography.  Informally, typical relations of this kind bound the amount of uncertainty in two different measurements performed on a quantum system.  This bound is typically a function of the overlap between the measurements performed.  Though there are many varieties \cite{MU-bound,smooth-uncertainty,ent2,ent3,ent1,ent4} (just to list a few - see \cite{survey,survey-2,survey-3} for a general survey).

Conditional quantum min entropy (which we define formally later but denote $\Hmin(A|E)$) is a very useful resource in quantum cryptography \cite{renner2008security} and so discovering new uncertainty bounds involving the min entropy of a system is important in various applications (though, outside of applications, such bounds are also interesting in and of themselves).  For instance, a useful quantum min entropy uncertainty relation was shown in \cite{smooth-uncertainty} and states that $\Hmin^\epsilon(Z|E) + H_{\max}^\epsilon(X|B) \ge \gamma$, where $\gamma$ is a function of the overlap of the two measurements (used to produce registers $Z$ and $X$ respectively) and $H_{\max}$ is the max entropy \cite{renner2004smooth,konig2009operational,renner2008security}.  Such a relation may be used, for instance, to bound an adversary's uncertainty on a quantum system given that the $X$ and $B$ registers are highly correlated.

In this work, we introduce a new quantum uncertainty relation, bounding the conditional quantum min entropy of a system based on the Hamming weight of a measurement outcome performed using a two-outcome POVM and the error probability of a \emph{classical} sampling technique.  Our relation applies to systems of arbitrary, but known and finite, dimension.  To our knowledge this form of uncertainty relation has not been discovered before.  To prove our relation, we utilize a quantum sampling framework introduced by Bouman and Fehr in \cite{sampling}.  This sampling framework was used in \cite{sampling} to prove the security of BB84.  Only recently, we discovered in \cite{krawec2019quantum} that it can be extended to more general areas of quantum information theory.  In particular we proved a quantum entropic uncertainty relation, however our previous relation from \cite{krawec2019quantum} was only applicable to qubits (dimension two systems) and did not involve the conditional min entropy.  As we consider conditional entropy here, our new bound is immediately applicable to quantum cryptographic applications.  We demonstrate this by considering and analyzing a new high-dimensional source independent quantum random number generator (QRNG).  Thanks to our new entropic uncertainty relation, and in particular it's need for only a two-outcome POVM in one of the measurements, our new QRNG does not require a full basis measurement in the test case making it potentially more practical (though, we stress, we are not interested in practical issues in this paper, only theoretical analyses).  We show that our new bound provides very optimistic random number generation rates when compared to other high dimensional QRNG's, even considering our protocol's simplicity in its quantum capabilities.

Our main result is described formally in Theorem \ref{thm:main}.  At a high level, our main result shows that for a given quantum state $\rho_{AE}$ (which is not necessarily i.i.d.), where the $A$ register acts on $n+m$ copies of a $d$-dimensional Hilbert space, if one were to measure part of the $A$ system using a particular two-outcome POVM, then, with high probability, one can bound the min entropy \emph{in the remaining unmeasured portion of the partially measurement state} should a measurement in a $d$ dimensional basis be performed on the remaining system.  This bound is a function of the observed outcome of the POVM measurement (in particular, the Hamming weight of this outcome) and also a function of the measurements performed.  This has interesting cryptographic applications as it allows one to argue about the entropy in partially measured states given a particular measurement outcome, with high probability.  Due to the two-outcome nature of the POVM case, it also allows for easy analysis of cryptographic primitives where users do not need to distinguish all $d$ basis states in a ``test'' case.  Experimentally, one need only distinguish a single basis state for the test basis and a full basis measurement, in an alternative, potentially easier to distinguish, basis, for the subsequent measurement.  That is, one need not be able to distinguish all basis states in two different bases.  This may lead to simpler cryptographic protocols and we show an example in this work.

We make several contributions in this work.  First, we derive a new quantum entropic uncertainty relation, relating conditional min entropy and the Hamming weight of a measurement outcome performed through a two-outcome measurement (regardless of the dimension of the underlying system).  Our relation is connected to the quantum sampling framework introduced in \cite{sampling} thus showing, in addition to our prior work in \cite{krawec2019quantum}, that this sampling framework has strong potential for applications in general quantum information theory while also showing a fascinating connection between classical and quantum science.  Finally, we analyze a new source-independent QRNG protocol using high dimensional quantum states, which is also potentially more practical than prior protocols in this setting.  We use our entropic uncertainty relation to prove the security of this protocol and show it can support very optimistic bit generation rates.  In fact, for many settings, our new protocol, thanks to our new entropic uncertainty relation, can actually outperform more complex protocols.  This shows the great potential benefits of using quantum sampling based entropic uncertainty relations as discussed here and in our previous work \cite{krawec2019quantum}.

\subsection{Notation}
We begin by introducing some notation and concepts we will use.  Let $\mathcal{A}_d = \{0, 1, \cdots, d-1\}$ be an alphabet of size $d$ (the exact characters do not matter so long as there is a distinguished ``$0$'' element).  Given $q \in \mathcal{A}_d^N$, and a subset $t = \{t_1, \cdots, t_m\}$ of $\{1, 2, \cdots, N\}$, we write $q_t$ to mean the substring of $q$ indexed by $t$, namely $q_t = q_{t_1}\cdots q_{t_m}$.  We use $q_{-t}$ to mean the substring of $q$ indexed by the complement of $t$.  We define the \emph{Hamming weight} of $q$ to be the number of non-zero characters in $q$.  The \emph{relative Hamming weight} of $q$, denoted $w(q)$ is the number of non-zero characters in $q$ divided by the total number of characters in $q$.  That is:
\begin{equation}
w(q) = |\{i \st q_i \ne 0\}|/|q|.
\end{equation}

A \emph{density operator} acting on Hilbert space $\mathcal{H}$ is a Hermitian positive semi-definite operator of unit trace.  Given element $\ket{\psi} \in \mathcal{H}$, we write $\kb{\psi}$ to mean the projector $\ket{\psi}\bra{\psi}$.  We use $\mathcal{H}_d$ to denote a $d$-dimensional Hilbert space.

The Shannon entropy of a random variable $X$ is denoted $H(X)$.  The $d$-ary entropy function, denoted $h_d(x)$ for $x \in [0,1]$ is defined to be:
\[h_d(x) = x\log_d(d-1) - x\log_d x - (1-x)\log_d(1-x).\]
We also define the \emph{extended $d$-ary entropy function}, denoted $\Hextd_d(x)$, for any $x \in \mathbb{R}$, as:
\begin{equation}
\Hextd_d(x) = \left\{\begin{array}{cl}
0 & \text{ if } x \le 0\\
h_d(x) & \text{ if } 0 \le x \le 1-1/d\\
1&\text{ if } x > 1-1/d
\end{array}\right.
\end{equation}

Let $\rho_{AE}$ be a density operator acting on Hilbert space $\mathcal{H}_A\otimes\mathcal{H}_E$.  Then, the \emph{conditional quantum min entropy} \cite{renner2008security}, denoted $\Hmin(A|E)_\rho$, is defined to be:
\[
\Hmin(A|E)_\rho = \sup_{\sigma_E}\max\left(\lambda\in\mathbb{R}\st 2^{-\lambda}I_A\otimes\sigma_E - \rho_{AE} \ge 0\right).
\]
Above, $I_A$ is the identity operator on $\mathcal{H}_A$ and $X\ge 0$ implies that $X$ is positive semi-definite.  If the $E$ system is trivial, it can be shown that $\Hmin(A)_\rho = -\log\lambda_{max}$, where $\lambda_{max}$ is the maximal eigenvalue of $\rho$.  If $\rho$ is a classical state (i.e., $\rho_A = \sum_xp_x\kb{x}$ for some orthonormal basis $\{\ket{x}\}$), then $\Hmin(A)_\rho = -\log\max p_x$.  The \emph{smooth min entropy}, denoted $\Hmin^\epsilon(A|E)_\rho$ is defined as \cite{renner2008security}:
\[\Hmin^\epsilon(A|E)_\rho = \sup_{\sigma\in\Gamma_\epsilon(\rho)}\Hmin(A|E)_\sigma,\]
where:
\[\Gamma_\epsilon(\rho) = \{\sigma \st \trd{\sigma-\rho} \le \epsilon\},\]
and $\trd{X}$ is the \emph{trace distance} of operator $X$.

Let $Z=\{\ket{i}\}$ be an orthonormal basis of $\mathcal{H}_A$ and let $\rho_{AE}$ be some density operator.  Then we write $\Hmin(Z|E)_\rho$ to mean the conditional min entropy of the state $\rho_{ZE}$ which results from a measurement of the $A$ system using basis $Z$.  If $\rho_{AE}$ is pure (i.e., $\rho_{AE} = \kb{\psi}$), then we may write $\Hmin(Z|E)_\psi$.  Similarly for the smooth min entropy.

Given a quantum-classical state $\rho_{AC}$ of the form $\rho_{AC} = \sum_{c=0}^Np_c\rho_A^c\otimes\kb{c}$, then it is easy to prove from the definition of min entropy that:
\begin{equation}\label{eq:cl-ent}
\Hmin(A|C)_\rho \ge \min_c\Hmin(A)_{\rho_A^c}.
\end{equation}

Min-entropy is a very useful quantity to measure and has many applications.  In quantum cryptography, one may use min-entropy to determine how many uniform independent random bits may be extracted from a quantum state.  In particular, through a \emph{privacy amplification} process, one may take as input a classical-quantum (cq) state $\rho_{AE}$ and process the $A$ register which is $N$ bits long to transform it into the cq-state $\sigma_{KE}$, where the $K$ register is $\ell$ bits long by hashing it through a two-universal hash function.  Then, as shown in \cite{renner2008security}, it holds that:
\begin{equation}\label{eq:PA}
\trd{\sigma_{KE} - I_K/2^\ell\otimes \sigma_E} \le 2^{-\frac{1}{2}(\Hmin^\epsilon(A|E)_\rho-\ell)} + 2\epsilon.
\end{equation}


An important lemma concerning min-entropy was proven in \cite{sampling} (also based on a Lemma from \cite{renner2008security}).
\begin{lemma}\label{lemma:super}
(From \cite{sampling}): Let $Z=\{\ket{i}\}$ and $X=\{\ket{x_i}\}$ be two orthonormal bases of $\mathcal{H}_Z$.  Then for any pure state $\ket{\psi} = \sum_{i\in J}\alpha_i\ket{i}\otimes\ket{\phi_i}_E \in \mathcal{H}_Z\otimes\mathcal{H}_E$ (where $\ket{\phi_i}_E$ are arbitrary, normalized, states in $\mathcal{H}_E$), if we define the mixed state $\rho = \sum_{i\in J}|\alpha_i|^2\kb{i}\otimes\kb{\phi_i}$, then:
\[\Hmin(X|E)_\psi \ge \Hmin(X|E)_\rho - \log_2|J|.\]
\end{lemma}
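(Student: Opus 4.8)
The plan is to invoke the operational characterization of conditional min entropy recalled above: $\Hmin(A|E)_\sigma \ge \lambda$ holds if and only if there is a density operator $\sigma_E$ on $\mathcal{H}_E$ with $2^{-\lambda}I_A\otimes\sigma_E - \sigma_{AE}\ge 0$. The starting observation is that $\rho$ is exactly the state obtained from $\ket\psi$ by measuring (equivalently, dephasing) the first register in the basis $Z$: one checks $(\kb i\otimes I)\kb\psi(\kb i\otimes I) = |\alpha_i|^2\,\kb i\otimes\kb{\phi_i}$, and summing over $i\in J$ recovers $\rho$. Hence both $\kb\psi$ and $\rho$ are supported, on the first factor, inside the $|J|$-dimensional subspace $\mathrm{span}\{\ket i\st i\in J\}$, and this fact is the source of the $\log_2|J|$ term.

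First I would write down the two states obtained by measuring the first register in the $X$ basis. Putting $\ket{v_x} = (\bra{x}\otimes I)\ket\psi = \sum_{i\in J}\alpha_i\braket{x|i}\ket{\phi_i}\in\mathcal{H}_E$, the measured-in-$X$ version of $\ket\psi$ is $\psi_{XE}:=\sum_x\kb x\otimes\kb{v_x}$, while the same measurement applied to $\rho$ gives $\rho_{XE}:=\sum_x\kb x\otimes\sigma_x$ with $\sigma_x = \sum_{i\in J}|\alpha_i|^2|\braket{x|i}|^2\,\kb{\phi_i}$.

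The key step is the operator inequality $\kb{v_x}\le |J|\cdot\sigma_x$ for every $x$, which I would prove directly. For arbitrary $\ket\eta\in\mathcal{H}_E$, writing $c_i = \alpha_i\braket{x|i}$,
\[
|\braket{\eta|v_x}|^2 = \Big|\sum_{i\in J}c_i\braket{\eta|\phi_i}\Big|^2 \le |J|\sum_{i\in J}|c_i|^2|\braket{\eta|\phi_i}|^2 = |J|\cdot\bra{\eta}\sigma_x\ket{\eta},
\]
the middle step being Cauchy--Schwarz applied to the $|J|$-component vectors $(c_i)_{i\in J}$ and $(\braket{\eta|\phi_i})_{i\in J}$. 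Summing over $x$ then gives $\psi_{XE}\le |J|\cdot\rho_{XE}$. I expect this to be the only genuinely delicate point: the $\ket{\phi_i}$ need not be mutually orthogonal, so one cannot simply diagonalize $\sigma_x$ to make the comparison, and Cauchy--Schwarz is precisely the tool that copes with the non-orthogonal case, at the price of the factor $|J|$.

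Finally I would combine the two ingredients. Let $\lambda = \Hmin(X|E)_\rho$ and choose a density operator $\sigma_E$ (near-optimal in the supremum defining $\Hmin(X|E)_\rho$) with $2^{-\lambda}I_X\otimes\sigma_E \ge \rho_{XE}$; scaling by $|J|>0$ preserves positive semidefiniteness, so $2^{-(\lambda-\log_2|J|)}I_X\otimes\sigma_E \ge |J|\cdot\rho_{XE} \ge \psi_{XE}$. By the min-entropy characterization this yields $\Hmin(X|E)_\psi \ge \lambda - \log_2|J| = \Hmin(X|E)_\rho - \log_2|J|$, which is the claim; if the supremum is not attained, one runs the argument with $\lambda - \delta$ in place of $\lambda$ for each $\delta > 0$ and lets $\delta\to 0$.
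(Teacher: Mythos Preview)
The paper does not supply its own proof of this lemma; it is quoted verbatim from \cite{sampling} (with an attribution to a lemma of Renner), so there is nothing in the paper to compare against. Your argument is a valid self-contained proof working directly from the operator characterization of $\Hmin$, and the overall strategy --- establish the operator inequality $\psi_{XE}\le |J|\,\rho_{XE}$ blockwise and then transport the witness $\sigma_E$ --- is sound.

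There is one slip worth fixing. The displayed inequality
\[
\Big|\sum_{i\in J}c_i\braket{\eta|\phi_i}\Big|^2 \le |J|\sum_{i\in J}|c_i|^2|\braket{\eta|\phi_i}|^2
\]
is correct, but it is \emph{not} Cauchy--Schwarz applied to the vectors $(c_i)_{i\in J}$ and $(\braket{\eta|\phi_i})_{i\in J}$: that pairing would yield $(\sum_i|c_i|^2)(\sum_i|\braket{\eta|\phi_i}|^2)$, which is a different (and for your purposes useless) bound. What you actually need is Cauchy--Schwarz applied to the all-ones vector $(1)_{i\in J}$ and the vector $(c_i\braket{\eta|\phi_i})_{i\in J}$, giving $|\sum_i d_i|^2\le|J|\sum_i|d_i|^2$ with $d_i=c_i\braket{\eta|\phi_i}$. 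With that correction the key step $\kb{v_x}\le|J|\,\sigma_x$ and the remainder of the argument go through exactly as you wrote them.
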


\section{Classical and Quantum Sampling}
As our entropic uncertainty relation is based on the quantum sampling technique introduced in \cite{sampling}, we take time here to review the relevant information.  Note that everything in this section is derived from \cite{sampling}.

Let $q \in \mathcal{A}_d^{N}$.  A \emph{sampling strategy} is a process of choosing a random subset $t \subset \{1, \cdots, N\}$ and then, given $q_t$, outputs a ``guess'' or estimate as to the value of $w(q_{-t})$.  That is, given an observation of the string $q$ indexed by $t$, the strategy will compute an estimate as to the relative Hamming weight in the unobserved portion of the string, $q_{-t}$.  In this work, we are interested in the sampling strategy that chooses $t$ of size $m$, uniformly at random and, when given $q_t$ (from a string $q \in \mathcal{A}_d^{m+n}$), will output $w(q_t)$ as a guess for $w(q_{-t})$.  We denote this strategy $\samp(d,m,n)$ (when the context is clear, we forgo writing the $m$ and $n$ parameters).

Let $B_{t,d}^\delta$ be the set of all words in $\mathcal{A}_d^{m+n}$ such that the estimate given by sampling strategy $\samp(d)$ is $\delta$ close to the actual value given a particular, fixed, subset $t$.  Formally:
\[
B_{t,d}^\delta = \{q \in \mathcal{A}_d^{n+m} \st |w(q_t) - w(q_{-t})| \le \delta\}.
\]
Then, the \emph{error probability of $\samp(d)$} is defined to be:
\[
\epsilon^{cl}_{\delta,d} = \max_{q \in \mathcal{A}^{n+m}}Pr\left(q \not\in B_{T,d}^\delta\right),
\]
where the above probability is over the choice of subset.  Note the ``$cl$'' superscript is used to enforce the notion that this is a classical sampling strategy still.  However, a classical sampling strategy may be extended to a quantum one in a natural way \cite{sampling}.  Let $Z=\{\ket{a_0}, \cdots, \ket{a_{d-1}}\}$ be an orthonormal basis of $\mathcal{H}_d$.  Then, given a state $\ket{\psi} \in \mathcal{H}_A\otimes\mathcal{H}_E$, where $\mathcal{H}_A \cong \mathcal{H}_d^{\otimes N}$, if we can write $\ket{\psi} = \ket{a_{i_1},a_{i_2},\cdots a_{i_N}}\otimes\ket{\phi}_E$, where $i=i_1\cdots i_N\in\mathcal{A}_d^{N}$, then $\ket{\psi}$ is said to have \emph{relative Hamming weight $w(i)$ in $A$ with respect to basis $Z$}.  Note that this definition is basis dependent, and not any arbitrary $\ket{\phi}_{AE}$ can be said to have Hamming weight $\beta$ using this definition - only those that are of this particular basis form.  Note we often denote $\ket{a_{i_1}\cdots a_{i_N}}$ as simply $\ket{a_i}$ if the context is clear.

Next, we define $\text{span}\left(B_{t,d}^\delta\right)$ to be $\text{span}\left(\left\{\ket{a_i} \st i \in \mathcal{A}_d^N \text{ and } |w(i_t) - w(i_{-t})|\le\delta\right\}\right)$.  Notice that if $\ket{\psi} \in \text{span}(B_{t,d}^\delta)\otimes\mathcal{H}_E$, then if sampling is done on the state $\ket{\psi}$ by measuring in the $Z$ basis on fixed subset $t$, it is guaranteed that the state will collapse to one which is a superposition of states that are $\delta$ close to the observed Hamming weight with respect to the basis used.

The main result from \cite{sampling}, besides introducing the above definitions, was to prove the following:
\begin{theorem}\label{thm:QST}
(Modified from \cite{sampling}): Let $m<n$ and consider the sampling strategy $\samp(d,m,n)$.  Then, for every pure state $\ket{\psi} \in \mathcal{H}_d^{\otimes(m+n)}\otimes\mathcal{H}_E$, there exists a collection of ``ideal states'' denoted $\{\ket{\phi}^t\}$, indexed over all subsets $t \subset \{1, \cdots, m+n\}$ of size $m$ such that $\ket{\phi^t} \in \text{span}(B_{t,d}^\delta)\otimes\mathcal{H}_E$ and:
\begin{equation}
\frac{1}{2}\trd{\frac{1}{T}\sum_t\kb{t}\otimes\kb{\psi} - \frac{1}{T}\sum_t\kb{t}\otimes\kb{\phi^t}} \le \sqrt{\epsilon^{cl}_{\delta,d}}.
\end{equation}
Above, $T = {n+m \choose m}$ and the sum is over all subsets $t$ of size $m$ and, again, $||\cdot||$ is the trace distance.
\end{theorem}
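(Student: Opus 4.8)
The plan is to reproduce, for the $d$-ary alphabet, the argument of Bouman and Fehr: I would build each ideal state as the normalized projection of $\ket\psi$ onto the subspace spanned by the classically ``good'' strings for that subset, and then bound the resulting trace distance by the classical sampling error via Jensen's inequality together with an exchange in the order of summation. Concretely, for each size-$m$ subset $t$ let $\Pi_t$ denote the orthogonal projector onto $\text{span}(B_{t,d}^\delta)\otimes\mathcal{H}_E$, and define $\ket{\phi^t} := \Pi_t\ket\psi/\sqrt{\braket{\psi|\Pi_t|\psi}}$ whenever $\Pi_t\ket\psi\neq 0$; if $\Pi_t\ket\psi = 0$, take $\ket{\phi^t}$ to be any fixed unit vector of $\text{span}(B_{t,d}^\delta)\otimes\mathcal{H}_E$, a set which is nonempty since the all-zero string lies in $B_{t,d}^\delta$. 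By construction $\ket{\phi^t}\in\text{span}(B_{t,d}^\delta)\otimes\mathcal{H}_E$, as required.

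Next I would reduce the claimed bound to a statement about the average ``good weight'' $\frac1T\sum_t\braket{\psi|\Pi_t|\psi}$. Since both operators in the trace distance are block-diagonal in the classical register $t$ (and the $\kb\psi$ block is the same for every $t$), the distance splits as $\frac1T\sum_t\frac12\trd{\kb\psi-\kb{\phi^t}}$, and for any two pure states $\frac12\trd{\kb\psi-\kb{\phi^t}}=\sqrt{1-|\braket{\psi|\phi^t}|^2}$. With the choice above one has $|\braket{\psi|\phi^t}|^2 = \braket{\psi|\Pi_t|\psi}$ (trivially also in the degenerate case, where both sides vanish), so the left-hand side of the theorem equals $\frac1T\sum_t\sqrt{1-\braket{\psi|\Pi_t|\psi}}$; by concavity of the square root this is at most $\sqrt{\frac1T\sum_t\left(1-\braket{\psi|\Pi_t|\psi}\right)}$. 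Hence it suffices to show $\frac1T\sum_t\left(1-\braket{\psi|\Pi_t|\psi}\right)\le\epsilon^{cl}_{\delta,d}$.

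For that inequality I would expand $\ket\psi=\sum_{i\in\mathcal{A}_d^{m+n}}\ket{a_i}\otimes\ket{\theta_i}_E$ in the $Z$ basis on the $A$ register, where the $\ket{\theta_i}$ are (subnormalized) vectors with $\sum_i\braket{\theta_i|\theta_i}=1$. Because $\{\ket{a_i}\}$ is orthonormal, $\Pi_t$ acts as the identity on $\ket{a_i}\otimes\mathcal{H}_E$ when $i\in B_{t,d}^\delta$ and annihilates it otherwise, so $\braket{\psi|\Pi_t|\psi}=\sum_{i\in B_{t,d}^\delta}\braket{\theta_i|\theta_i}$ and therefore $1-\braket{\psi|\Pi_t|\psi}=\sum_{i\notin B_{t,d}^\delta}\braket{\theta_i|\theta_i}$. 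Exchanging the sums over $t$ and over $i$ gives $\frac1T\sum_t\left(1-\braket{\psi|\Pi_t|\psi}\right)=\sum_i\braket{\theta_i|\theta_i}\cdot\Pr_t\!\left(i\notin B_{t,d}^\delta\right)$, where the probability is over the uniform choice of $t$. By the definition of $\epsilon^{cl}_{\delta,d}$ as a maximum over all strings in $\mathcal{A}_d^{n+m}$, each of these probabilities is at most $\epsilon^{cl}_{\delta,d}$, and since the weights $\braket{\theta_i|\theta_i}$ sum to $1$ the entire expression is bounded by $\epsilon^{cl}_{\delta,d}$, which finishes the argument.

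I expect the genuine content — and the step most worth stating carefully — to be the last one: recognizing that the defect $1-\braket{\psi|\Pi_t|\psi}$ is exactly the $Z$-basis weight that $\ket\psi$ places on the strings that are classically bad for subset $t$, so that averaging over $t$ and swapping the order of summation converts a quantum quantity into the purely classical error probability $\epsilon^{cl}_{\delta,d}$. The remaining points — defining $\ket{\phi^t}$ in the degenerate case $\Pi_t\ket\psi=0$, and invoking Jensen's inequality in the correct direction — are routine and require no delicate estimates.
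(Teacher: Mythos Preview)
Your argument is correct and is precisely the Bouman--Fehr construction (projection onto the ``good'' subspace, the pure-state trace-distance formula, Jensen's inequality, and the swap of sums that reduces to the classical error). The paper itself does not reproduce this reasoning at all: its proof consists of a single sentence pointing to Theorem~3 of \cite{sampling} and remarking that the existence of the ideal states is established there, so you have supplied exactly the content the paper defers to the reference.
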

\begin{proof}
To show that the above follows from Theorem 3 in \cite{sampling}, note that, in their proof, they show that for any fixed $\ket{\psi}$, there exists a suitable ideal state satisfying the needed inequality.
\end{proof}
Actually, in \cite{sampling}, a more general statement was proven for arbitrary sampling strategies, though we focus only on $\samp(d,m,n)$ here.  We also reword their result from \cite{sampling} slightly to give a more applicable form of their result, for our work here (see also \cite{krawec2019quantum}), however the above follows immediately from the proof of their main theorem.

The following lemma, proven in \cite{sampling} will be important.
\begin{lemma}\label{lemma:samp}
(From \cite{sampling}): Let $\delta > 0$ and $d \ge 2$. Consider $\samp(d, m, n)$ for $m < n$.  Then:
$\epsilon_{\delta,d}^{cl} \le 2\exp\left(\frac{-\delta^2m(n+m)}{m+n+2}\right).$
\end{lemma}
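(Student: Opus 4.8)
The plan is to reduce the claim to a standard concentration inequality for sampling without replacement. Fix an arbitrary $q \in \mathcal{A}_d^{n+m}$ and pass to a binary population: set $X_i = 1$ if $q_i \ne 0$ and $X_i = 0$ otherwise, and let $\mu = \frac{1}{n+m}\sum_{i=1}^{n+m}X_i$ be the (total) relative Hamming weight of $q$. Then, for a subset $t$ of size $m$, $w(q_t) = \frac{1}{m}\sum_{i \in t}X_i$ is exactly the empirical mean of a uniformly random $m$-element sample drawn without replacement from $X_1,\dots,X_{n+m}$, while $w(q_{-t}) = \frac{1}{n}\sum_{i \notin t}X_i$. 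Observe that only the zero/nonzero pattern of $q$ enters this reformulation, so the resulting bound will automatically be independent of $d$, and it will hold uniformly over $q$, which is what $\epsilon^{cl}_{\delta,d}$ requires.

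The next step is an elementary identity linking the two halves: since $m\,w(q_t) + n\,w(q_{-t}) = (n+m)\mu$ for every $t$, one has $w(q_t) - w(q_{-t}) = \frac{n+m}{n}\bigl(w(q_t)-\mu\bigr)$. Hence the ``bad'' event $q \notin B_{t,d}^\delta$, i.e. $|w(q_t) - w(q_{-t})| > \delta$, is precisely the event $|w(q_t) - \mu| > \frac{n}{n+m}\delta$. It therefore suffices to control how far the empirical mean of a random $m$-subset can stray from the population mean $\mu$.

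For that deviation I would invoke a Hoeffding--Serfling-type inequality for sampling without replacement (equivalently, a Hoeffding bound for the hypergeometric distribution, keeping the finite-population correction): for a $[0,1]$-valued population of size $N = n+m$ sampled $m$ times without replacement, $\Pr\bigl(|w(q_t)-\mu| > \tau\bigr) \le 2\exp\!\bigl(-\frac{2m(n+m)}{n+1}\,\tau^2\bigr)$, with the factor $\frac{n+1}{n+m}$ coming from $1-(m-1)/N$. Substituting $\tau = \frac{n}{n+m}\delta$ gives $\epsilon^{cl}_{\delta,d} \le 2\exp\!\bigl(-\frac{2mn^2}{(n+1)(n+m)}\,\delta^2\bigr)$. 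The remaining step is purely arithmetic: to match the stated form one checks $\frac{2mn^2}{(n+1)(n+m)} \ge \frac{n+m}{n+m+2}$, i.e. $2n^2(n+m+2) \ge (n+1)(n+m)^2$; expanding, the difference of the two sides equals $n^3 + 3n^2 - nm^2 - 2nm - m^2$, which is nonnegative for all $0 \le m \le n$ (with equality at $m=n$), and in particular for $m < n$.

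The main obstacle is obtaining the precise constant: the vanilla ``with replacement'' Hoeffding bound $2\exp(-2m\tau^2)$ is not tight enough across the full range $m < n$ (it fails to yield the claimed exponent once $m$ is comparable to $n$), so one genuinely needs the sharper bound carrying the $1-(m-1)/N$ finite-population correction. Everything else — the reduction to a binary alphabet, the linear relation between the two halves, and the closing polynomial inequality — is routine. One could alternatively obtain a bound of the same shape via a bounded-differences/Azuma argument over a uniformly random permutation, but that typically costs a constant factor and would not reproduce the exact denominator $m+n+2$.
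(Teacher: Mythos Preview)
The paper does not supply its own proof of this lemma; it simply quotes the bound from Bouman--Fehr \cite{sampling}, so there is no argument in the paper to compare against. Your route---reduce to a binary population, rewrite $|w(q_t)-w(q_{-t})|$ as $\tfrac{n+m}{n}|w(q_t)-\mu|$, apply the Hoeffding--Serfling bound for sampling without replacement with the finite-population factor $1-(m-1)/N=(n+1)/(n+m)$, and then check a polynomial inequality---is the standard one and is correct.

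Two small remarks. First, there is a typo in your displayed comparison: you want $\tfrac{2mn^2}{(n+1)(n+m)} \ge \tfrac{m(n+m)}{n+m+2}$ (the $m$ on the right is missing in what you wrote), though your cross-multiplied form $2n^2(n+m+2)\ge (n+1)(n+m)^2$ is the right inequality after cancelling $m$. Second, your verification that $n^3+3n^2-nm^2-2nm-m^2\ge 0$ on $0\le m\le n$ is fine; noting that the expression is strictly decreasing in $m$ (derivative $-2(nm+n+m)<0$) and vanishes at $m=n$ makes this a one-line check rather than an assertion.
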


\section{Main Result}

We are now in a position to state and prove our new entropic uncertainty relation.  We consider the following experiment, denoted $\experimentname$.  This experiment takes as input a quantum system of the form $\rho_{TAE} = \sum_tp_t\kb{t}\otimes\rho_{AE}^t$, where the sum is over all subsets $t$ of a fixed size $m$, and a two element POVM $\Lambda = \{\Lambda_0, \Lambda_1\}$.  Note that $\rho_{AE}^t$ may be equal to $\rho_{AE}^{t'}$ for $t\ne t'$ (i.e., the $AE$ portion may be independent of the $T$ register initially) and we assume the $A$ portion acts on a Hilbert space $\mathcal{H}_d^{\otimes (m+n)}$ where $d$, $m$, and $n$ are known to the experiment.  This experiment will first measure the $T$ register resulting in outcome $t$ and causing the state to collapse to $\rho_{AE}^t$.  Next, it will measure those $d$-dimensional subspaces of the $A$ register as indexed by subset $t$ using POVM $\Lambda$ resulting in outcome $q \in \{0,1\}^m$ and, then tracing out the measured portion leaving only the $n$ unmeasured subspaces of $A$ and the $E$ system, results in post-measurement state $\rho(t,q)$.  The values $t$, $q$, and the quantum state $\rho(t,q)$ are returned by the experiment.  A particular run of this experiment, with a particular output, is denoted $(t,q,\rho(t,q))\experiment{\rho_{TAE}, \Lambda}$.

Our main result involves a bound on the min entropy of the remaining system if it is measured in a $d$ dimensional basis as a function of the specific returned $q$.  With high probability, given a particular observation $q$, one may argue that the min entropy in the remaining portion, if measured in an alternative basis, may be lower bounded by a function of the basis choice and the Hamming weight of $q$.  In particular, with high probability, if the Hamming weight of $q$ is small, one may argue there is a high amount of min entropy in the remaining portion of the system if measured in an alternative basis.

\begin{theorem}\label{thm:main}
Let $\epsilon > 0$, $0 < \beta < 1/2$, and $\rho_{AE}$ an arbitrary quantum state acting on $\mathcal{H}_A\otimes\mathcal{H}_E$, where $\mathcal{H}_A \cong \mathcal{H}_d^{\otimes(n+m)}$ for $d \ge 2$ and $m < n$.  Let $Z = \{\ket{z_i}\}_{i=0}^{d-1}$ and $X = \{\ket{x_i}\}_{i=0}^{d-1}$ be two orthonormal bases of $\mathcal{H}_d$ and $\Lambda$ be the two outcome POVM with elements $\{\Lambda_0=\kb{x_0}, \Lambda_1=I-\kb{x_0}\}$ (where, $\kb{x_0} = \ket{x_0}\bra{x_0}$).  Finally, let $(t,q,\rho(t,q))\experiment{\frac{1}{T}\sum_t\kb{t}\otimes\rho_{AE}, \Lambda}$, where the sum is over all subsets $t \subset \{1,2,\cdots,n+m\}$ of size $m$ and $T = {n+m \choose m}$.  Then it holds that:
\begin{equation}
Pr\left(\Hmin^{\epsilon'}(Z|E)_{\rho(t,q)} + \frac{n\Hextd_d(w(q)+\delta)}{\log_d 2} \ge n\gamma\right) \ge 1-\epsilon'',
\end{equation}
where the probability is over the choice of subset $t$ and the measurement outcome $q$.  Above:
\[\gamma = -\log_2\max_{a,b\in\mathcal{A}_d}|\braket{z_a|x_b}|^2,
\]
and $\epsilon' = 4\epsilon+2\epsilon^\beta$,  $\epsilon'' = 2\epsilon^{1-2\beta}$ and finally:
\begin{equation}
\delta = \sqrt{\frac{(m+n+2)\ln(2/\epsilon^2)}{m(m+n)}}
\end{equation}
\end{theorem}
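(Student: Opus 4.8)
The plan is to use the quantum sampling framework to replace the real input by an ``ideal'' input that is, per subset $t$, supported on a low--Hamming--weight subspace in the $X$ basis; to show the ideal output of $\experimentname$ satisfies the claimed entropy bound deterministically; and then to push this back to the real output through the trace--distance bound of Theorem \ref{thm:QST} together with smoothing. First I would reduce to the case that $\rho_{AE}$ is pure: replacing $\rho_{AE}$ by a purification $\ket{\psi}_{AEE'}$ and running $\experimentname$ on $\tfrac1T\sum_t\kb{t}\otimes\kb{\psi}$ yields, for each outcome $(t,q)$, a state $\tilde\rho(t,q)$ with $\text{Tr}_{E'}\tilde\rho(t,q)=\rho(t,q)$; since discarding a register never decreases smooth min entropy, $\Hmin^{\epsilon'}(Z|E)_{\rho(t,q)}\ge\Hmin^{\epsilon'}(Z|EE')_{\tilde\rho(t,q)}$, so it suffices to prove the bound for the purified input. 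Next I would invoke Theorem \ref{thm:QST} with the orthonormal basis $X$ and the strategy $\samp(d,m,n)$: this produces ideal states $\ket{\phi^t}\in\text{span}(B_{t,d}^\delta)\otimes\mathcal{H}_E$, i.e.\ $\ket{\phi^t}=\sum_j c^t_j\ket{x_j}\otimes\ket{\phi^t_j}_E$ with the sum over $X$-basis strings $j\in\al_d^{n+m}$ satisfying $|w(j_t)-w(j_{-t})|\le\delta$, and with $\tfrac12\trd{\tfrac1T\sum_t\kb{t}\otimes\kb{\psi}-\tfrac1T\sum_t\kb{t}\otimes\kb{\phi^t}}\le\sqrt{\epsilon^{cl}_{\delta,d}}$. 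For the stated $\delta$, Lemma \ref{lemma:samp} gives $\epsilon^{cl}_{\delta,d}\le 2\exp(-\ln(2/\epsilon^2))=\epsilon^2$, so this trace distance is at most $2\epsilon$.

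The core step is to show that, for every fixed $t$ and every POVM outcome $q\in\{0,1\}^m$, the state $\rho_{\text{ideal}}(t,q)$ obtained by running $\experimentname$ on the ideal input satisfies $\Hmin(Z|E)_{\rho_{\text{ideal}}(t,q)}+n\Hextd_d(w(q)+\delta)/\log_d 2\ge n\gamma$. Conditioning on $t$ collapses the input to $\ket{\phi^t}$; applying $\Lambda=\{\kb{x_0},I-\kb{x_0}\}$ to the $d$-dimensional factors indexed by $t$ and observing $q$ retains exactly the terms whose $X$-string $j_t$ has $j_i\ne 0\Leftrightarrow q_i=1$, hence $w(j_t)=w(q)$, and by the defining property of $\ket{\phi^t}$ those terms also satisfy $w(j_{-t})\le w(q)+\delta$. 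Tracing out the measured factors leaves a classical mixture of pure states of the form $\ket{\chi}=\sum_{j_{-t}\in J}\alpha_{j_{-t}}\ket{x_{j_{-t}}}\otimes\ket{\varphi_{j_{-t}}}_E$ with $J\subseteq\{j_{-t}\in\al_d^n:w(j_{-t})\le w(q)+\delta\}$, so the standard volume bound together with the definition of $\Hextd_d$ gives $|J|\le d^{\,n\Hextd_d(w(q)+\delta)}$. For one such $\ket{\chi}$, Lemma \ref{lemma:super}---with the roles of its two bases played by $X$ and $Z$ respectively---gives $\Hmin(Z|E)_{\chi}\ge\Hmin(Z|E)_{\rho_\chi}-\log_2|J|$, where $\rho_\chi=\sum_{j_{-t}}|\alpha_{j_{-t}}|^2\kb{x_{j_{-t}}}\otimes\kb{\varphi_{j_{-t}}}$; and a short computation directly from the definition of min entropy (condition on a virtual register recording $j_{-t}$: each of the $n$ remaining factors is then an $X$-eigenstate, for which every $Z$-outcome has probability at most $\max_{a,b}|\braket{z_a|x_b}|^2=2^{-\gamma}$, so every $Z$-outcome of the product has probability at most $2^{-n\gamma}$) shows $\Hmin(Z|E)_{\rho_\chi}\ge n\gamma$. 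Since $\log_2|J|\le n\Hextd_d(w(q)+\delta)\log_2 d=n\Hextd_d(w(q)+\delta)/\log_d 2$, this yields the bound for $\ket{\chi}$, and a final application of (\ref{eq:cl-ent}) over the mixture gives it for $\rho_{\text{ideal}}(t,q)$.

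Finally I would transfer the bound to the real experiment. Running $\experimentname$ amounts to a quantum operation (measure the $T$ register, apply $\Lambda$ to the factors in $t$, discard them) followed by copying out the classical outcomes, so the joint state of $(t,q)$ and the residual system for the real run is within trace distance $2\epsilon$ of the corresponding state for the ideal run. A Markov--type argument---exactly as in \cite{sampling} and \cite{krawec2019quantum}, with the splitting governed by $\beta$---then shows that, except with probability at most $\epsilon''=2\epsilon^{1-2\beta}$ over $(t,q)$ drawn from the real experiment, $\trd{\rho(t,q)-\rho_{\text{ideal}}(t,q)}\le\epsilon'=4\epsilon+2\epsilon^\beta$; whenever this holds, $\Hmin^{\epsilon'}(Z|E)_{\rho(t,q)}\ge\Hmin(Z|E)_{\rho_{\text{ideal}}(t,q)}\ge n\gamma-n\Hextd_d(w(q)+\delta)/\log_d 2$, which rearranges to the claimed inequality; composing with the purification reduction above completes the proof.

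\textbf{Anticipated main obstacle.} I expect the last paragraph to need the most care: the real and ideal runs of $\experimentname$ induce different distributions on $(t,q)$, and the per--outcome ideal object may be sub--normalized, so one must compare against a properly normalized ``good'' state and keep careful track of constants to land exactly on $\epsilon'=4\epsilon+2\epsilon^\beta$ and $\epsilon''=2\epsilon^{1-2\beta}$; this is where the hypothesis $0<\beta<1/2$ and the specific value of $\delta$ are used. A smaller but real subtlety is verifying that applying the two--outcome $\Lambda$ to a superposition of $X$-basis strings and then tracing out genuinely leaves a mixture supported on the claimed low--weight $X$-subspace---i.e.\ that the Hamming weight of the POVM outcome coincides with the relative $X$-weight of the surviving terms.
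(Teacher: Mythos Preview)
Your plan is correct and follows essentially the same route as the paper: invoke the quantum sampling theorem with respect to the $X$ basis, analyze the ideal post-measurement state via Lemma~\ref{lemma:super} and the Hamming-ball volume bound, then transfer to the real state through the trace-distance bound and a concentration argument, with purification handling the mixed case. Two small corrections: first, $\sqrt{\epsilon^{cl}_{\delta,d}}\le\sqrt{\epsilon^2}=\epsilon$, not $2\epsilon$, and this is exactly what is needed for the stated constants to come out; second, the paper uses Chebyshev (bounding both mean and variance of $\Delta_{t,q}=\tfrac12\trd{\rho(t,q)-\sigma(t,q)}$ by $2\epsilon$) rather than plain Markov, which is why the exponent is $1-2\beta$ rather than $1-\beta$. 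A minor structural difference: where you apply Lemma~\ref{lemma:super} to each pure component of the ideal mixture and then invoke a conditional form of (\ref{eq:cl-ent}), the paper instead purifies the entire mixture $\sigma(t,q)$ with an extra register $K$, rewrites the purification as a single superposition over $J_q$, and applies Lemma~\ref{lemma:super} once---both arguments are valid and yield the same bound.
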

\begin{proof}
Our proof follows similar techniques we used first in \cite{krawec2019quantum}, though with suitable modifications for higher-dimensional systems entangled with an ancilla system.  We first consider the case where $\rho_{AE}$ is pure; that is $\rho_{AE} = \kb{\psi}$.  Consider the sampling strategy $\samp(d)$ as discussed earlier.  From Theorem \ref{thm:QST} using $\rho_{AE}$ and $\samp(d)$, we know there exits an ideal state $\sigma = \frac{1}{T}\sum_t\kb{t}\otimes\kb{\phi^t}$ such that:
\begin{enumerate}
  \item $\ket{\phi^t} \in \text{span}\left(B_{t,d}^\delta\right)\otimes \mathcal{H}_E$
  \item $\frac{1}{2}\trd{\frac{1}{T}\sum_t\kb{t}\otimes\rho_{AE} - \frac{1}{T}\sum_t\kb{t}\otimes\kb{\phi^t}} \le \sqrt{\epsilon_\delta^{cl}}.$
\end{enumerate}
From Lemma \ref{lemma:samp}, along with our choice of $\delta$, we have $\sqrt{\epsilon_\delta^{cl}} = \epsilon$.  We first analyze the ideal state $\sigma$.

Consider running $(t,q,\sigma(t,q))\experiment{\sigma, \Lambda}$.  First, the experiment will choose a random sample by measuring the $T$ register, causing $\sigma$ to collapse to the ideal $\ket{\phi^t}$.  Next, a measurement is performed using POVM $\Lambda$ resulting in outcome $q \in \{0,1\}^m$.  The experiment then traces out the measured portion resulting in $\sigma(t,q)$, a density operator acting on $\mathcal{H}_d^{\otimes n}\otimes\mathcal{H}_E$.  Since $\ket{\phi^t} \in \text{span}\left(B_{t,d}^\delta\right)\otimes\mathcal{H}_E$, we claim that the post measurement state is of the form:
\begin{equation}\label{eq:ideal-PM}
\sigma(t,q) = \sum_{k\in\al_{d-1}^{wt(q)}}p_k\cdot P\left(\sum_{i\in J_q^{(k)}}\alpha_i^{(k)}\ket{x_i}\otimes\ket{E_i^{(k)}}\right),
\end{equation}
where $P(z) = zz^*$, $\al_d$ was defined in the Notation section, $wt(q)$ is the (non-relative) Hamming weight of $q$, and:
\begin{equation}\label{eq:Jq}
J_q^{(k)} \subset J_q = \{i\in\al_d^n \st |w(i) - w(q)| \le \delta\}.
\end{equation}
That this is the form of the post measurement state after the experiment is clear.  Indeed, note that $\ket{\phi^t}$ is a superposition of vectors of the form $\ket{x_i}$ with $|w(i_t) - w(i_{-t})|\le\delta$.  Thus, on observing $q$ using POVM $\Lambda$ on subspace indexed by $t$, but before tracing out the measured portion, the state is of the form:
\[
\sum_{k\in K_q}\sqrt{p_k}\ket{x_k}_Q\sum_{i\in J_q^{(k)}}\alpha_i^{(k)}\ket{x_i}\otimes\ket{E_i^{(k)}},
\]
where $K_q = \{k\in\al_d^m \st k_i=0 \text{ iff } q_i=0\}$.  Tracing out the $Q$ register, the final step of the experiment, yields Equation \ref{eq:ideal-PM}.

We now claim that $\Hmin(Z|E)_{\sigma(t,q)} \ge n(\gamma - \Hextd_d(w(q)+\delta)/\log_d 2)$.  Consider a purification of Equation \ref{eq:ideal-PM}:
\[\ket{\sigma_{KRE}(t,q)} = \sum_k\sqrt{p_k}\ket{k}\sum_{i\in J_q^{(k)}}\alpha_i^{(k)}\ket{x_i}_R\ket{E_i^{(k)}}.\]
Then it holds that $\Hmin(Z|E)_{\sigma(t,q)} \ge \Hmin(Z|EK)_{\sigma(t,q)}$.  By reordering terms, we may write this purification as:
\[
\ket{\sigma_{KRE}(t,q)} = \sum_{i\in J_q}\beta_i\ket{x_i}\ket{\widetilde{E}_i}_{EK},
\]
where $J_q$ was defined in Equation \ref{eq:Jq} and the $\ket{\widetilde{E}_i}$ are normalized states in $\mathcal{H}_E\otimes\mathcal{H}_K$.  Define the mixed state $\chi = \sum_{i\in J_q}|\beta_i|^2\kb{i}\otimes\kb{\widetilde{E}_i}$.  Then, from Lemma \ref{lemma:super}, we have:
\[
\Hmin(Z|EK)_{\sigma(t,q)} \ge \Hmin(Z|EK)_\chi - \log_2|J_q|.
\]
We first consider a bound on $\Hmin(Z|EK)_\chi$.  After measuring in the $Z$ basis, the resulting state may be written as the density operator $\chi_{ZEK}$:
\begin{equation}
\chi_{ZEK} = \sum_{i \in J_q}|\beta_i|^2\left(\sum_{j\in\al_d^n}p(j|i)\kb{z_j}\right)\otimes\kb{\widetilde{E}_i}_{EK},
\end{equation}
where:
$p(j|i) = |\braket{z_j|x_i}|^2 = \prod_{\ell=1}^n|\braket{z_{j_\ell}|x_{i_\ell}}|^2 \le c^n,$
and $c = \max_{a,b\in\al_d}|\braket{z_a|x_b}|^2$.  We add an additional register $\mathcal{H}_I$ spanned by orthonormal basis $\{\ket{I_i}\}$ and define the state:
\begin{align*}
\chi_{ZEKI} = \sum_{i\in J_q}|\beta_i|^2\underbrace{\left(\sum_{j\in\al_d^n}p(j|i)\kb{z_j}\right)}_{\chi_{i}}\otimes\kb{\widetilde{E}_i}\otimes\kb{I_i}
\end{align*}
The $EKI$ register may be considered, taken together, as a classical system and, so, using Equation \ref{eq:cl-ent}, we have:
\begin{align*}
\Hmin(Z|EKI)_\chi &\ge \min_i\Hmin(Z)_{\chi_{i}}\\
&= \min_i\left(-\log\max_jp(j|i)\right)\\
&= -\max_{i,j}\log p(j|i)\ge -\log c^n = n\gamma.
\end{align*}
Using the well-known bound on the volume of a Hamming sphere, we have $|J_q|\le|\{i\in\al_d^n\st w(i) \le w(q)+\delta\}| \le d^{n\Hextd_d(w(q)+\delta)}$ (here, we use our extended version to avoid the case when $w(q)+\delta > 1-1/d$; indeed, in that case, the above holds trivially).  Combining everything, we conclude:
\begin{align}
\Hmin(Z|E)_{\sigma(t,q)} &\ge \Hmin(Z|EK)_{\sigma(t,q)}\notag\\
&\ge \Hmin(Z|EK)_\chi - \log_2|J_q|\notag\\
&\ge \Hmin(Z|EKI)_\chi - \log_2|J_q|\notag\\
&\ge n\left(\gamma - \frac{\Hextd_d(w(q)+\delta)}{\log_d 2}\right).
\end{align}

Of course, this was only the ideal state where the sampling process is guaranteed to produce a good result.  We now turn our attention to the real case $\rho_{AE}$.  Consider $\rho_{TQRE}$, a density operator describing the output of the experiment in its entirety, modeling the output $t$ and $q$ as random variables.  We may write this state as:
\[
\rho_{TQRE} = \frac{1}{T}\sum_t\kb{t}_T\otimes\sum_{q\in\{0,1\}^m}p(q|t)\kb{q}_Q\otimes\rho(t,q),
\]
where $p(q|t)$ is the probability of observing $q$ given that subset $t$ was chosen.  Of course $\rho(t,q)$ is the post measurement state (acting on space $RE$) output in that event, tracing out the measured portion of $\mathcal{H}_A$ (the $R$ portion is the unmeasured portion remaining after measurement).  Similarly, we may define $\sigma_{TQRE}$ to be the result of the entire experiment performed on the ideal state:
\[
\sigma_{TQRE} = \frac{1}{T}\sum_t\kb{t}_T\otimes\sum_{q\in\{0,1\}^m}\hat{p}(q|t)\kb{q}_Q\otimes\sigma(t,q).
\]
Of course, $\sigma(t,q)$, the post measurement state for the ideal scenario, was analyzed above.

Since quantum operations, in particular our experiment, cannot increase trace distance, we have $\frac{1}{2}\trd{\rho_{TQRE} - \sigma_{TQRE}} \le \epsilon$.  Let $\delta_{t,q} = \hat{p}(q|t) - p(q|t)$.  By elementary properties of trace distance, we have:
\begin{align*}
\epsilon &\ge \frac{1}{2}\trd{\rho_{TQRE} - \sigma_{TQRE}}\\
&= \frac{1}{2}\sum_t\frac{1}{T}\sum_q\trd{p(q|t)\rho(t,q) - \hat{p}(t,q)\sigma(t,q)}\\
&=\frac{1}{2T}\sum_{t,q}\trd{p(q|t)(\rho(t,q) - \sigma(t,q)) - \delta_{t,q}\sigma(t,q)}\\
&\ge\sum_{t,q}p(q\wedge t)\frac{1}{2}\trd{\rho(t,q)-\sigma(t,q)} - \sum_{t,q}\frac{1}{2T}\trd{\delta_{t,q}\sigma(t,q)}\\
&=\sum_{t,q}p(q\wedge t)\Delta_{t,q} - \sum_{t,q}\frac{1}{2T}|\delta_{t,q}|,
\end{align*}
where we define $p(q\wedge t) = \frac{1}{T}p(q|t)$ and $\Delta_{t,q} = \frac{1}{2}\trd{\rho(t,q)-\sigma(t,q)}$.  The above follows from the reverse triangle inequality and the fact that $\trd{\sigma(t,q)} = 1$ since $\sigma(t,q)$ is a positive operator of unit trace. Note that $\Delta_{t,q} \le 1$ due to properties of trace distance.

Since partial trace is a quantum operation, we have (tracing out the $RE$ registers):
$\epsilon \ge \frac{1}{2}\trd{\rho_{TQ}-\sigma_{TQ}} = \sum_{t,q}\frac{1}{2T}|\delta_{t,q}|.$
Combining the above yields:
$\sum_{t,q}p(q\wedge t) \Delta_{t,q} \le 2\epsilon.$
Now, we treat $\Delta_{t,q}$ as a random variable over the choice of subset ($t$) and measurement outcome ($q$).  It is clear that the expected value of $\Delta_{t,q}$ is $\mathbb{E}(\Delta_{t,q}) = \mu \le 2\epsilon$.  The variance, $V^2$, is also bounded by:
\begin{align*}
V^2 &= \sum_{t,q}p(q\wedge t)\Delta_{t,q}^2 - \mu^2 \le \sum_{t,q}p(q\wedge t)\Delta_{t,q}\le2\epsilon\notag\\
\end{align*}
The above follows from the fact that $\Delta_{t,q} \le 1$.  By Chebyshev's inequality, we have:
$Pr\left(|\Delta_{t,q} - \mu| \le \epsilon^\beta\right) \ge 1-2\epsilon^{1-2\beta}.$
Thus, except with probability at most $2\epsilon^{1-2\beta}$, it holds that:
$|\Delta_{t,q}-\mu| \le \epsilon^\beta\Longrightarrow \frac{1}{2}\trd{\rho(t,q) - \sigma(t,q)} \le 2\epsilon+\epsilon^\beta.$
Since, in such a case, $\sigma(t,q) \in \Gamma_{4\epsilon+2\epsilon^\beta}(\rho(t,q))$, we conclude:
\begin{align*}
\Hmin^{4\epsilon+2\epsilon^\beta}(Z|E)_{\rho(t,q)} &\ge \Hmin(Z|E)_{\sigma(t,q)}\\
&\ge n\left(\gamma - \frac{\Hextd(w(q)+\delta)}{\log_d 2}\right),
\end{align*}
as desired.

Of course, if $\rho_{AE}$ is not pure, it may be purified by adding an ancilla system $\mathcal{H}_{I}$.  In that case, due to strong sub additivity, the above analysis still holds, thus completing the proof.

\end{proof}

\section{Application to QRNGs}

While interesting in itself, our new entropic uncertainty relation has applications to cryptography.  Note that we consider the main contribution of this paper to be our Theorem \ref{thm:main}, however, in this section, we show how it can be used in applications.

In particular, we use it now to demonstrate the security of the following \emph{source independent} quantum random number generator (QRNG).  The goal of a QRNG is to utilize quantum effects to distill a truly  uniform random string.  The source independent model, introduced in \cite{si-qrng-first} assumes the quantum source is controlled by an adversary (though the dimension of the system is known and bounded) while the measurement devices are trusted.  Furthermore, in this model, the goal is to produce a uniform random string, independent of any adversary's system.  The protocol we analyze is the following: a source, potentially adversarial, produces a quantum state in $\mathcal{H}_d^{\otimes (n+m)}\otimes\mathcal{H}_E$ where $d$, $m$, and $n$ are public parameters set by the users of the protocol.  The $n+m$ qudits are sent to the user Alice, while the $\mathcal{H}_E$ system is kept by the adversary.  Alice chooses a subset of size $m$ qudits to measure using POVM $\Lambda=\{\kb{x_0}, I-\kb{x_0}\}$ where $\ket{x_0} = \mathcal{F}\ket{0}$, and $\mathcal{F}$ is the $d$ dimensional quantum Fourier transform.  The remaining $n$ qudits are measured in the computational $Z=\{\ket{0},\cdots,\ket{d-1}\}$ basis resulting in a string $r$.  This is then processed through privacy amplification to hash $r$ down to an $\ell$ bit string $s$ which is the final random string output by the protocol.  Note that, an honest source should prepare a state of the form $\ket{x_0}^{\otimes(m+n)}$, independent of $\mathcal{H}_E$. To our knowledge this source independent QRNG has not been considered in the past.  Indeed, prior work in this model requires the user to be able to perform a full basis measurement both for the test and the random distillation modes.  \emph{Thus, our protocol would be simpler to implement in practice (as one need not distinguish all states in two bases).}

Let $\epsilon > 0$ and set $\epsilon_{PA} = 9\epsilon+4\epsilon^\beta$ be the desired distance from an ideal uniform random string of size $\ell$ independent of $E$'s system.  Using Equation \ref{eq:PA} and Theorem \ref{thm:main}, after running the protocol, on observing outcome $q$ during the test with $\Lambda$, except with probability $2\epsilon^{1-2\beta}$, it holds that:
%
\begin{equation}
\ell_{ours} \ge n\left(\log d - \frac{\Hextd(w(q)+\delta)}{\log_d 2}\right) - 2 \log\frac{1}{\epsilon},
\end{equation}
giving a simple, clean, proof of security for this new protocol.  Thus, to analyze the number of random bits one may distill from the protocol we introduced above, one simply observes $q$ using a test of POVM $\Lambda$ \emph{which does not require a full basis measurement}.  From this, one may, with high probability depending on user parameters, determine how many random bits are output even if the source is adversarial.

We compare with two other high dimensional source independent QRNG's - one from \cite{si-qrng-first} (with bit generation length $\ell_1$ as derived in \cite{si-qrng-first}) and one from \cite{xu2016experimental} (with bit generation length $\ell_2$ as derived in \cite{xu2016experimental}).  Both use alternative entropic uncertainty relations to compute $\ell_i$.  \emph{Note that both also require full basis measurements for testing.}


For the protocol in \cite{si-qrng-first}, an adversarial source prepares a state in $\mathcal{H}_d^{\otimes(n+m)}\otimes\mathcal{H}_E$.  Alice measures a subset in the $X=\{\ket{x_i}\}$ basis where $\ket{x_i} = \mathcal{F}\ket{i}$.  The remaining qudits are measured in the computational $Z$ basis and are processed through privacy amplification.  The secret random string size is computed in \cite{si-qrng-first} to be:
\[
\ell_{1} \ge n\left(\log_2 d - 2\log_2\left[ \frac{\Gamma(m+d)}{\Gamma\left(m+d+\frac{1}{2}\right)}\sum_{i=0}^{d-1}\frac{\Gamma\left(c_i + \frac{3}{2}\right)}{\Gamma\left(c_i + 1\right)}\right]\right),
\]
where $c_i$ is the number of measurement outcomes (out of the $m$ test measurements) resulting in outcome $\ket{x_i}$ and $\Gamma(x)$ is the Gamma function.  To derive the above, they used an entropic uncertainty relation from \cite{smooth-uncertainty}, along with the Bayesian estimator for the max entropy from \cite{holste1998bayes}.

The protocol introduced in \cite{xu2016experimental} involves an adversarial source that prepares an entangled pair of qudits, sending both pairs to Alice.  On test iterations, Alice measures both pairs in the basis $X$ (as defined above).  On other iterations, she measures only the first pair in basis $Z$, discarding the second pair.  Again, the authors use an entropic uncertainty relation from \cite{smooth-uncertainty}, though an alternative method of estimating the max entropy using results in \cite{max-ent-bound} and the fact that the source is preparing entangled pairs.  They prove the secret random string length, after privacy amplification, is:
\[
\ell_2 \ge n\log_2d - \log_2\gamma(d_0 + \delta'),
\]
where:
\[
\gamma(x) = (x+\sqrt{1+x^2})\left(\frac{x}{\sqrt{1+x^2}-1}\right)^x,
\]
and:
\[
\delta' = d\sqrt{\frac{N^2}{n^2m}\ln\left(\frac{4}{\epsilon'}\right)}.
\]
Above, $d_0 = \frac{1}{m}\sum_{i=1}^m|c_A(i) - c_B(i)|$, where $c_A(i)\in\mathcal{A}_d$ is measurement outcome on test iteration $i$ of the $A$ register in basis $X$ (similar for $c_B(i)$).

To evaluate our protocol ($\ell_{ours}$), we set $\beta = 1/3$ and $\epsilon = 10^{-36}$ which implies the failure probability is $2\times 10^{-12}$ while $\epsilon_{PA} = 4\times10^{-12}$. Note we did not optimize $\beta$ which may lead to higher rates for our protocol and we use $7\%$ of total signals for sampling.    When considering noise of $x$ in these evaluations we assume a depolarization channel.  For this, we set $q = x$ for our model; for $\ell_1$ we set $c_i = m\cdot x/(d-1)$ if $i \ne 0$ and $c_0 = m(1-x)$; and finally for $\ell_2$, we set $d_0 = x$ (which is advantageous for that model; indeed $x$ is only a lower-bound for $d_0$ so $\ell_2$ may be lower than we plot here). A more detailed comparison for other noise channels would be interesting future work.

The results are shown in Figure \ref{fig:1}.  We find that, for very few signals, $\ell_1$ outperforms both while for a very large number of signals, $\ell_2$ outperforms both.  However there is a large window in between where our new protocol, as analyzed by our new entropic uncertainty relation, outperforms both systems, \emph{even though we actually have a simpler protocol.}

\begin{figure}
  \centering
  \includegraphics[width=0.45\linewidth]{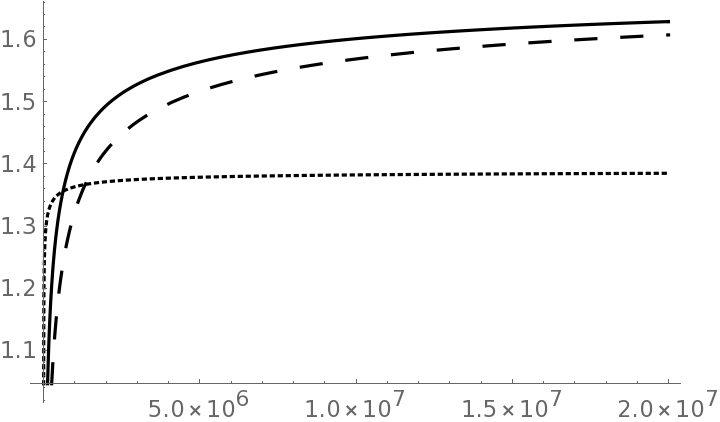}
  \includegraphics[width=0.45\linewidth]{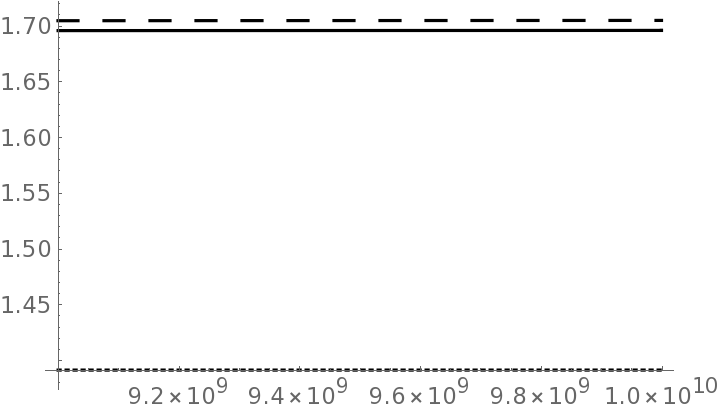}

  \includegraphics[width=0.45\linewidth]{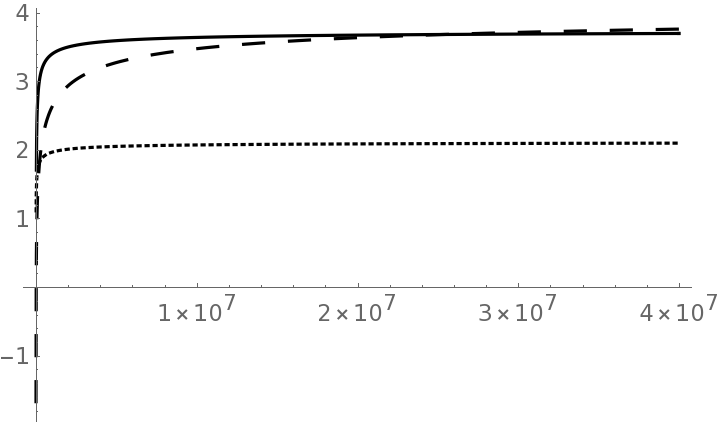}
  \includegraphics[width=0.45\linewidth]{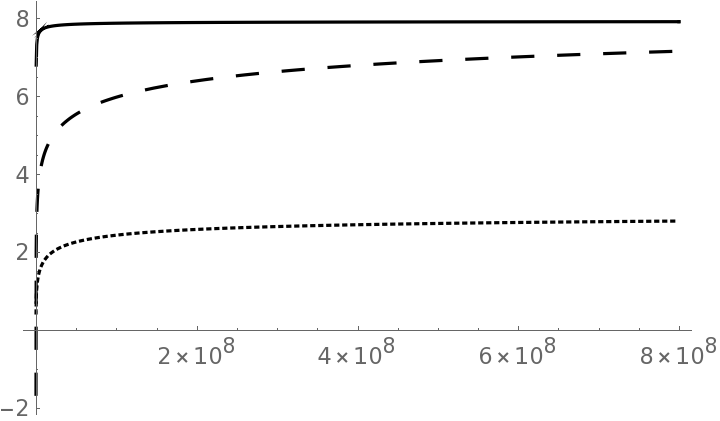}

\caption{Secret random bit generation rates. $x$-axis: Total number of signals $N=n+m$; $y$-axis: Secret random bit generation rate: $\ell/N$. Solid: ours ($\ell_{ours}/N$); Dotted: $\ell_1/N$ from \cite{si-qrng-first}; Dashed: $\ell_2/N$ from \cite{xu2016experimental}. Upper-left: $d = 2^2$ with $2\%$ noise; Upper-Right: $d=2^2$ with $2\%$ noise, higher number of iterations; Lower-Left: $d=2^5$ with $10\%$ noise; Lower-Right: $d=2^{10}$ with $10\%$ noise.  See text for explanation.}\label{fig:1}
\end{figure}

\section{Closing Remarks}

In this paper, we introduced a novel entropic uncertainty relation bounding the conditional min-entropy of a system based on the result of a measurement in a two-outcome POVM and the probability of failure of a \emph{classical} sampling strategy.  Furthermore, this shows yet another fascinating application of the quantum sampling framework as introduced in \cite{sampling} to areas in general quantum information theory.  While interesting in and of itself, we also showed how this could be used to analyze the security of a novel source independent QRNG utilizing restricted measurement capabilities.  We show our new uncertainty relation provides optimistic bit generation rates for our protocol, despite its inability to perform a complete measurement in two bases.  We believe the quantum sampling framework can hold even further applications when combined with our proof technique here and in \cite{krawec2019quantum}, and may shed light on new min entropy bounds of great use in quantum cryptography.

\section*{Acknowledgment}
The author would like to acknowledge support from NSF grant number 1812070.

\balance


\end{document}